\documentclass{article}

\title{CRYPTANALYSIS OF A KEY EXCHANGE PROTOCOL\\
BASED ON A CONGRUENCE-SIMPLE SEMIRING ACTION}

\author{A. Otero S\'anchez$^{1}$, J.A. L\'opez Ramos$^{2}$ \\
$^{1-2}$ Department of Mathematics, \\
University of Almer\'{\i}a, \\
04120 Almer\'{\i}a, \\
Spain \\
aos073@inlumine.ual.es, jlopez@ual.es}

\newenvironment{proof}{%
  \par\noindent%
  \textsf{\underline{Proof}}%
    \par%
  }%
  {\par\hfill $\blacksquare$%
}%

\usepackage{amssymb}
\usepackage{amsmath}

\newtheorem{theorem}{Theorem}[section]

\newtheorem{lemma}[theorem]{Lemma}
\newtheorem{rem}[theorem]{Remmark}
\newtheorem{proposition}[theorem]{Proposition}
\newtheorem{definition}[theorem]{Definition}

%
  {\par\hfill $\blacksquare$%
}%
\newtheorem{example}[theorem]{Example}
\newtheorem{alg}[theorem]{Algorithm}

\begin{document}
\maketitle
\begin{abstract}
We show that a previously introduced key exchange based on a congruence-simple semiring action is not secure by providing an attack that reveals the shared key from the distributed public information for any of such semirings.\end{abstract}

\small{\textit{Keywords}: Cryptanalysis, Key Exchange Protocol, Congruence-simple semiring}

\section{Introduction}	

Public key cryptography rises up with Diffie-Hellman foundational paper, \cite{diffiehellman}, where they define a secure key exchange based on the so-called Discrete Logarithm Problem over finite fields. This problem can be extended to any cyclic group as it was shown by Miller and Koblitz in \cite{miller} and \cite{koblitz}, respectively, where they used the group structure of the set of points of an elliptic curve to have an elliptic curves version of the Diffie-Hellman protocol still widely used. In 2007, Maze et al. in \cite{maze} defined a general setting that provides, as particular cases, the previous Diffie-Hellman key exchange protocols both on finite fields and elliptic curves. They define the Semigroup Action Problem as follows:

\medskip

{\it Let $(G,\cdot )$ be a semigroup, $S$ be a set,  and let $\varphi :G\times S \rightarrow S$ be an action of $G$ on $S$, i.e. $\varphi$ satisfies that $\varphi (g\cdot h,s)=\varphi (g,\varphi (h,s))$. The Semigroup Action Problem establishes that given $x \in S$ and $y\in \varphi (G,x)$, find $g\in G$ such that $\varphi (g,x)=y$. }

\medskip

Maze et al. show that using this problem, it is possible to define a Diffie-Hellman type key exchange by using an action of an abelian semigroup on a set, extending the classical protocols. As an example of its possible use, they propose an action given by a semiring. 

\medskip

Let us fix a finite semiring $R$, not embeddable in a field and not necessarily commutative. Given such a semiring, consider $C$, the center of $R$. 
 Let $n$ denote an arbitrary positive integer. For $M \in $ Mat$_n(R)$, denote by $C[M]$ the abelian sub-semiring generated by $M$, i.e., the semiring of polynomials in $M$ 
 with coefficients in $C$. Let $M_1, M_2 \in$ Mat$_n(R)$ and consider the following action:

$$(C[M_1] \times C[M_2]) \times \rm{Mat}_n(\it{R})  \longrightarrow \rm{Mat}_n(\it{R})$$
$$\big((p(M_1), q(M_2)), X\big)  \longmapsto p(M_1) \cdot X \cdot q(M_2)$$

If Alice and Bob wish to exchange a key using this semiring action as a basis, then they agree on a finite semiring $R$ with nonempty center $C$, not embeddable into a field. They choose a positive integer $n$ and matrices $M_1, M_2 \in$ Mat$_n(R)$ and act as follows:

\begin{enumerate}

\item Alice chooses polynomials $p_a, q_a \in C[t]$ and computes $A = p_a(M_1)\cdot S\cdot q_a(M_2)$. She sends $A$ to Bob.

\item Bob chooses polynomials $p_b, q_b \in C[t]$ and computes $B = p_b(M_1)\cdot S\cdot q_b(M_2)$. He sends $B$ to Alice.
\end{enumerate}

The common key computed by both is then

$$p_a(M_1)Bq_a(M_2) = p_a(M_1)p_b(M_1)Sq_b(M_2)q_a(M_2) = p_b(M_1)Aq_b(M_2).$$

They assert that, for cryptographic purposes, it is important that the involved semirings are congruence-simple to avoid a Pohlig-Hellman type reduction of the Semigroup Action Problem given that any congruence relation on $R$ yields a projection of the SAP instance onto a quotient semiring, from which one may gain information about the solution to the original instance. Thus, they use this justification in order to work in congruence-simple semirings. They illustrate the use of this action with an example on a semiring with 6 elements constructed with the classification obtained by Zumbr\"agel in \cite{zumbragel}. Later in \cite{steinwandt}, the authors make a cryptanalysis of the example by solving a system of equations based on the operation tables of this particular semiring. 

Our aim in this paper is to show that it is possible to develop a general attack on this protocol for any congruence-simple semiring just by assuming a bound of the degrees of the polynomials that are used in the definition of the protocol.

\section{Algebraic background}

\begin{definition}
A set $R$ with two internal operations $+$ and $\cdot$ is called a semiring if both operations are associative and verify that $a\cdot (b+c)=a\cdot b+a\cdot c$ and $(b+c)\cdot a=b\cdot a+c\cdot a$ for every $a,b,c\in R$. We say that $S$ is additively (resp. multiplicatively) commutative in case addition (resp. multiplication) satisfies commutativity. We say that $R$ is commutative in case both operations satisfy commutativity. 
\end{definition}

\begin{definition} A congruence relation on a semiring $R$ is an equivalence relation $\sim$ such that

$$a\sim b \Rightarrow \left\{ \begin{array}{ccc} 
a+c & \sim & b+c \\
c+a & \sim & c+b \\
a\cdot c & \sim & b\cdot c \\
c\cdot a & \sim & c\cdot b \\
\end{array}\right.$$

\noindent for every $a,b,c\in R$. A semiring $R$ that admits no congruence relations other than the trivial
ones, $id_R$ and $R \times R$, is said to be congruence-simple.
\end{definition}

The following result, due to C. Monico, establishes a set of conditions, among which, one is necessarily satisfied by every  additively commutative congruence-simple semiring.

\begin{theorem} (\cite[Theorem 4.1]{monico}\label{necessary}
Let $R$ be a finite, additively commutative, congruence-simple semiring. Then one of the following holds:
\begin{enumerate}
\item $\vert R\vert \leq 2$.
\item $R\cong \rm{Mat}_n(\mathbb{F}_q)$, for some finite field $\mathbb{F}_q$ and some $n \geq 1$.
\item $R$ is a zero multiplication ring of prime order.
\item $R$ is additively idempotent, i.e., $x+x=x$ for every $x\in R$.
\item $R$ contains an absorbing element $\infty$, i.e. $x\cdot \infty=\infty \cdot x=\infty$ for every $x\in R$ and $R+R=\{ \infty \}$. 
\end{enumerate}
\end{theorem}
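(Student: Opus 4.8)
The plan is to manufacture semiring congruences out of nothing but the additive structure of $R$. The point is that because $R$ is distributive, any equivalence relation defined purely in terms of $+$ that is compatible with $+$ will automatically be compatible with $\cdot$ as well; congruence-simplicity then forces every such relation to be either the diagonal $\Delta$ or the full relation $R\times R$, and each alternative peels off some of the five cases. I would organize the argument around two families of such congruences.

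First I would treat the \emph{cancellative congruence} $\kappa$, defined by $a\mathrel{\kappa}b$ iff $a+c=b+c$ for some $c\in R$. A direct check shows $\kappa$ is a semiring congruence: transitivity uses the witness $c+d$, and multiplicative compatibility follows by multiplying $a+c=b+c$ on either side. If $\kappa=\Delta$, then $(R,+)$ is cancellative, and a finite commutative cancellative semigroup is an abelian group; hence $R$ is a finite ring. For a ring, semiring congruences correspond exactly to two-sided ideals (the $\theta$-class of $0$ is an ideal and the classes are its additive cosets), so $R$ is congruence-simple iff it is a simple ring. Invoking Artin--Wedderburn together with Wedderburn's theorem on finite division rings, a finite simple ring is either $\mathrm{Mat}_n(\mathbb{F}_q)$ (case 2, which also covers the finite fields) or, in the case $R^2=0$, a zero-multiplication ring whose additive group has no proper subgroups, hence of prime order (case 3); the two-element degenerate instances fall under case 1.

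The remaining branch is $\kappa=R\times R$. Here I would bring in the \emph{multiple congruences} $\sigma_n$, where $a\mathrel{\sigma_n}b$ iff $na=nb$ (with $na=a+\dots+a$, $n$ summands). Each $\sigma_n$ is a semiring congruence, since $n(a+c)=na+nc$ and $n(ac)=(na)c$. Because $(R,+)$ is a finite commutative semigroup, there is a fixed $N$ for which $Nx$ is additively idempotent for every $x$; I fix such an $N$ and examine $\sigma_N$. If $\sigma_N=\Delta$, then $x\mapsto Nx$ is injective, hence a bijection of the finite set $R$; since its image lies in the set $E$ of additive idempotents, we conclude $E=R$, i.e. $R$ is additively idempotent (case 4).

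The crux — and the step I expect to be the main obstacle — is the last case $\kappa=\sigma_N=R\times R$. Now $Nx$ is a constant idempotent $\infty:=Nx$ independent of $x$, and I would aim to show that addition collapses completely, $a+b=\infty$ for all $a,b$, so that $R+R=\{\infty\}$. Once that is in hand the absorbing property is automatic, since $\infty\cdot x=(a+b)x=ax+bx\in R+R=\{\infty\}$ and symmetrically $x\cdot\infty=\infty$, placing $R$ in case 5 (with $|R|\le 2$ as the degenerate possibility of case 1). Establishing that \emph{every} sum equals $\infty$ is the delicate part: it requires combining the total failure of additive cancellation ($\kappa=R\times R$) with the constancy of $N$-fold sums to exclude any intermediate additive behaviour, and this is where the bulk of the careful finite-commutative-semigroup bookkeeping lives.
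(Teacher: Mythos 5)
First, a framing remark: the paper itself gives no proof of this statement --- it is quoted verbatim from Monico (\cite[Theorem 4.1]{monico}) --- so the comparison is really with Monico's argument. Your two families of congruences are exactly the right tools: $\kappa$ (defined by $a\mathrel{\kappa}b$ iff $a+c=b+c$ for some $c$) and $\sigma_N$ (defined by $Na=Nb$, with $N$ chosen so that $Nx$ is additively idempotent for every $x$) are both legitimate semiring congruences, and the branches $\kappa=\Delta$ (cancellative, hence a finite ring, hence cases 1--3 via ideals $=$ congruences and Artin--Wedderburn) and $\sigma_N=\Delta$ (the map $x\mapsto Nx$ is a bijection onto the idempotents, giving case 4) are handled correctly.

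The genuine gap is exactly where you flagged it: in the branch $\kappa=\sigma_N=R\times R$ you assert that $R+R=\{\infty\}$ but prove nothing, deferring to ``careful finite-commutative-semigroup bookkeeping.'' That step is not bookkeeping; it requires manufacturing further congruences and invoking simplicity again, and as written the only case in which the absorbing structure of case 5 actually gets established is missing. One way to close it: (i) since $\kappa=R\times R$, for each $a$ pick $c$ with $a+c=\infty+c$ and add $(N-1)c$ to both sides to get $a+\infty=\infty$, so $\infty$ is additively absorbing and $jx=\infty$ for all $j\ge N$; (ii) consider $\sigma_2$: if $\sigma_2=\Delta$ then $x\mapsto 2x$ is a bijection of the finite set $R$, so $2^{m}x=x$ for some $m$ and hence $x=2^{mk}x=\infty$ once $2^{mk}\ge N$, forcing $|R|=1$; therefore $\sigma_2=R\times R$ and $x+x=2\infty=\infty$ for every $x$; (iii) the Rees-type relation ``$a=b$ or $a,b\in R+R$'' is a congruence because $R+R$ absorbs both addition and left/right multiplication, so either $R+R=\{\infty\}$ (case 5, since multiplicative absorption of $\infty$ follows from $c\infty=c(Nx)=N(cx)=\infty$) or $R+R=R$; the latter is impossible for $|R|\ge 2$, because by (i) and (ii) any sum with a repeated or infinite summand collapses to $\infty$, so a non-$\infty$ element of $R=R+R=R+R+R=\cdots$ would have to be a sum of $k$ pairwise distinct elements of $R\setminus\{\infty\}$ for every $k$, contradicting finiteness. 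Without steps of this kind your proof establishes cases 1--4 but not the dichotomy with case 5.
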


In \cite[Proposition 3.1]{zumbragel} J. Zumbr\"agel showed an important property concerning the addition of these semirings.

\begin{proposition} \label{zum}
Let $R$ be a congruence-simple semiring which is not a ring. Then the addition $(R, +)$ is idempotent.
\end{proposition}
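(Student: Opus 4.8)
The plan is to build a congruence straight out of the addition and then let congruence-simplicity do the work: the only obstruction to additive idempotency will turn out to be that $(R,+)$ is group-like, which would force $R$ to be a ring. I work under the standing hypotheses of this setting, that $R$ is finite and additively commutative.

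First I would introduce the relation $\sim$ on $R$ given by $a\sim b$ if and only if there is some $c\in R$ with $a+c=b+c$. Reflexivity and symmetry are clear, and transitivity uses additive commutativity: from $a+c=b+c$ and $b+d=e+d$ one gets $a+(c+d)=e+(c+d)$. I would then verify that $\sim$ is a congruence. It is compatible with $+$ because $a+c=b+c$ gives $(a+x)+c=(b+x)+c$; it is compatible with $\cdot$ by distributivity, since right-multiplying $a+c=b+c$ by $x$ yields $ax+cx=bx+cx$, so $ax\sim bx$, and left-multiplying gives $xa\sim xb$.

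Next I would apply congruence-simplicity, so $\sim$ equals $id_R$ or $R\times R$. If $\sim=id_R$, addition is cancellative; a finite commutative cancellative semigroup is an abelian group, because each translation $x\mapsto a+x$ is injective and hence bijective, producing a neutral element and additive inverses. With the distributive laws this makes $R$ a ring, against the hypothesis. Hence $\sim=R\times R$: addition is nowhere cancellative, i.e.\ for all $a,b$ there is $c$ with $a+c=b+c$.

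The remaining task, and the step I expect to be the real obstacle, is to promote nowhere-cancellativity to genuine idempotency $a+a=a$. Passing to the minimal additive ideal (kernel) $K$ of the finite semigroup $(R,+)$, which is a group with identity $e$, one checks that adding any fixed $c$ acts on $K$ as a translation $k\mapsto k+(e+c)$ and is therefore injective on $K$; nowhere-cancellativity then forces $|K|=1$, so $e$ is an additive absorbing zero. What survives is exactly the degenerate, additively absorbing situation of case (5) of Theorem \ref{necessary}, where $R+R=\{\infty\}$ and addition is absorbing rather than idempotent. This case is the genuine sticking point: it is excluded by appealing to Theorem \ref{necessary} itself — cases (2) and (3) are rings and so barred by hypothesis, case (4) delivers idempotency outright, and the residual absorbing and small cases are removed by the nondegeneracy conventions on $R$ — after which additive idempotency is the only possibility left.
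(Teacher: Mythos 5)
First, a point of comparison: the paper does not actually prove this proposition --- it is quoted from Zumbr\"agel \cite[Proposition 3.1]{zumbragel} --- so there is no in-paper argument to measure yours against, and your attempt has to stand on its own. Your opening moves are sound and natural: the relation $a\sim b \Leftrightarrow \exists c\ (a+c=b+c)$ is indeed a congruence on a finite additively commutative semiring, simplicity gives the dichotomy, the cancellative branch turns the finite commutative semigroup $(R,+)$ into an abelian group and hence $R$ into a ring, and the kernel/translation argument in the other branch correctly forces the minimal additive ideal to be a single additively absorbing element $e$.

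The gap is the final step. From ``there is an additively absorbing element and addition is nowhere cancellative'' you cannot conclude $R+R=\{e\}$, and you have not shown $a+a=a$ for any element other than $e$; an absorbing element is perfectly compatible with idempotent addition ($T_3$ and $T_5$ in Theorem \ref{order2}) and with non-idempotent addition alike, so ``what survives is exactly case (5)'' is a non sequitur. Falling back on Theorem \ref{necessary} does not close the argument either: cases (1) and (5) are not excluded by the hypothesis ``not a ring'', and the paper states no ``nondegeneracy conventions'' that would remove them. Indeed $T_1$ and $T_2$ of Theorem \ref{order2} are congruence-simple, additively commutative, not rings (the addition has no neutral element, so $(R,+)$ is not a group) and not additively idempotent (there $1+1=0$); so the proposition as literally transcribed in this paper is false, and no proof can succeed without importing Zumbr\"agel's stronger semiring axioms. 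His statement holds because his semirings carry an additive neutral element $0$ with $0x=x0=0$, which collapses the $R+R=\{\infty\}$ situation to the one-element semiring; a correct proof must use that axiom explicitly, and even then the passage from ``absorbing element exists'' to ``addition is idempotent'' needs a further argument (for instance, via the semiring endomorphism $x\mapsto Nx$, with $N$ chosen so that $Nx$ is an additive idempotent for every $x$: its kernel congruence is trivial by simplicity, and injectivity plus finiteness makes every element an additive idempotent).
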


The next result follows immediately by the definition of idempotent element. 

\begin{lemma}
Let $R$ be a semiring such that $x+x=x$ for every $x\in R$ and let $S=$Mat$_n(R)$ Then the following statements hold
\begin{enumerate}
    \item For every $x,y,z\in R$
    \begin{equation}\label{eqn:propSemianillo1}
        x+y=z \Rightarrow x+z=y+z=z.
    \end{equation}    

    \item   For every $M\in S$ 
    \begin{equation} \label{eqn:propSemianillo2}
         M+M=M 
    \end{equation}

    \item  For every $X,Y, Z\in  S$ 
    \begin{equation} \label{eqn:propSemianillo3}
         X+Y=Z \Rightarrow X+Z=Y+Z=Z  
    \end{equation}

    \item  For any finite set $I$ 
    \begin{equation} \label{eqn:propSemianillo4}
         \sum_{i\in I}M_i=Z \Rightarrow M_i+Z=Z \quad \forall M_{i},Z \in S, \forall i\in I. 
    \end{equation}
   
\end{enumerate}
\end{lemma}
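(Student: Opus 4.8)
The plan is to prove the four assertions in the stated order, each one feeding into the next, relying throughout only on associativity, commutativity, and the idempotence hypothesis $x+x=x$ of the addition.

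For the first assertion I would substitute $z=x+y$ directly. For $x+z$ this gives $x+z=x+(x+y)=(x+x)+y=x+y=z$, using associativity and idempotence alone. For $y+z$ I would first rearrange the hidden summand: $y+z=y+(x+y)=y+(y+x)=(y+y)+x=y+x=z$, where commutativity of addition is used both to bring the two copies of $y$ together before applying idempotence and to identify $y+x$ with $z=x+y$. This is the only place where additive commutativity is used; it is available in the present setting, since the relevant congruence-simple semirings are additively commutative (the standing hypothesis in Theorem \ref{necessary}) and additively idempotent by Proposition \ref{zum}.

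The second assertion is immediate: matrix addition in $S=\mathrm{Mat}_n(R)$ is entrywise, so $(M+M)_{ij}=M_{ij}+M_{ij}=M_{ij}$ for all $i,j$, whence $M+M=M$. This says precisely that $S$ is itself an additively idempotent (and additively commutative) semiring, and therefore the third assertion is nothing more than the first assertion applied to $S$ in place of $R$: from $X+Y=Z$ the very same computation yields $X+Z=Y+Z=Z$. Equivalently, one may argue coordinatewise and invoke the first assertion in each entry.

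Finally, for the fourth assertion I would fix an arbitrary index $j\in I$ and split off the $j$-th summand, writing $Z=\sum_{i\in I}M_i=M_j+Y_j$ with $Y_j:=\sum_{i\in I,\,i\neq j}M_i\in S$, where commutativity and associativity of matrix addition justify the regrouping. Applying the third assertion to $M_j+Y_j=Z$ gives $M_j+Z=Z$, and since $j$ was arbitrary the conclusion holds for every $i\in I$ (the case $|I|=1$ being just the second assertion, and an empty $I$ being vacuous). There is no real obstacle in any of these steps --- the lemma genuinely is a direct unwinding of the definitions --- and the single point worth stating explicitly is the appeal to additive commutativity in the second half of the first assertion, without which the conclusion would fail for noncommutative bands.
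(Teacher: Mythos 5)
Your proof is correct and is exactly the direct unwinding the paper has in mind --- the paper offers no written argument at all, merely asserting that the lemma ``follows immediately by the definition of idempotent element.'' Your explicit remark that additive commutativity is needed for the $y+z=z$ half of the first assertion (and is available in this setting, the relevant semirings being additively commutative) is a worthwhile precision that the paper's bare statement of the lemma glosses over.
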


Let $R$ be a semiring, and let $C=\{ c\in R: cx=xc, \  \forall x\in R\}$ i.e. the multiplicative center of $R$. Let us denote by $C[x]$ the set of polynomials in $x$ with coefficients in $C$. Then $C_m[x]$ will be the subset of $C[x]$ whose elements have degree less than or equal to $m$. 

We now prove a result that will allow us to achieve our aim.  

\begin{theorem}\label{main}
Let $R$ be a finite, additively idempotent semiring and let the matrices $M_1,S,M_2, A \in $Mat$_n(R)$. Assume that there exist two polynomials  $\varphi , \psi \in C_{m}[x]$  such that $\varphi(M_1)S\psi(M_2)=A $. Let $W$ be the set

$$W=\big\{(x,a,b) \in C \times \{0,1,\dots ,m\}\times \{ 0,1, \dots ,m\} : x M_1^a S M_2^b+A=A\big\}$$.
 
Then
\begin{equation}
    F_{S}[X,Y]=\sum_{(k,i,j)\in W} k X^i S Y^j
\end{equation}
verifies that 
\begin{equation}
    F_{S}[M_1,M_2]=A
\end{equation}
\end{theorem}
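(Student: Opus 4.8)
The plan is to exploit the order that additive idempotency places on $\mathrm{Mat}_n(R)$: for matrices $U,V$ write $U\preceq V$ when $U+V=V$. Since $R$ is additively idempotent, so is $\mathrm{Mat}_n(R)$ by \eqref{eqn:propSemianillo2}, and $\preceq$ is a partial order (reflexivity and transitivity are routine, while antisymmetry is exactly the place where additive commutativity of $R$ is needed). I would prove the two relations $F_S[M_1,M_2]\preceq A$ and $A\preceq F_S[M_1,M_2]$ and then conclude $F_S[M_1,M_2]=A$ by antisymmetry. Note first that, $R$ being finite, $C$ and hence $W$ are finite, so $F_S[M_1,M_2]$ is a genuine finite sum.

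The relation $F_S[M_1,M_2]\preceq A$, i.e. $F_S[M_1,M_2]+A=A$, is immediate from the definition of $W$: each summand $kM_1^iSM_2^j$ with $(k,i,j)\in W$ satisfies $kM_1^iSM_2^j+A=A$, so adding these already-absorbed terms one at a time, via associativity and idempotency, leaves $A$ unchanged. This direction uses nothing beyond the definition of $W$.

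The reverse relation $A\preceq F_S[M_1,M_2]$ is where the hypothesis on $\varphi,\psi$ enters, and it is the crux of the proof. Writing $\varphi(x)=\sum_{i=0}^m c_ix^i$ and $\psi(x)=\sum_{j=0}^m d_jx^j$ with $c_i,d_j\in C$, I would expand $A=\varphi(M_1)S\psi(M_2)$ by distributivity and then pull the central coefficients out of the matrix products, which is legitimate precisely because $c_i,d_j$ commute with every entry of the matrices involved. This yields $A=\sum_{i,j}(c_id_j)M_1^iSM_2^j$ with each $c_id_j\in C$ and $0\le i,j\le m$. Applying \eqref{eqn:propSemianillo4} to this expression shows that every monomial $(c_id_j)M_1^iSM_2^j$ is absorbed by $A$, so $(c_id_j,i,j)\in W$ for all $i,j$. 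Hence the terms appearing in the expansion of $A$ form a sub-collection of those summed in $F_S[M_1,M_2]$, and idempotency (absorbing the remaining terms of $F_S[M_1,M_2]$) gives $A+F_S[M_1,M_2]=F_S[M_1,M_2]$, that is, $A\preceq F_S[M_1,M_2]$.

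Finally, antisymmetry turns the two relations into the desired equality; concretely, $A=F_S[M_1,M_2]+A=A+F_S[M_1,M_2]=F_S[M_1,M_2]$, the middle step using additive commutativity. I expect the main obstacle to be the central step: one must check carefully that the central scalars really can be factored out so that each expanded term is literally of the single-scalar form $kM_1^iSM_2^j$, and that the degree bound $m$ keeps the exponents within the index ranges defining $W$; only with both in hand does \eqref{eqn:propSemianillo4} place every such monomial in $W$. A secondary point worth flagging is that antisymmetry of $\preceq$, and hence the final combination, genuinely requires additive commutativity, which holds for the semirings under consideration.
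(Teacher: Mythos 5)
Your proof is correct and follows essentially the same route as the paper's: you expand $A=\varphi(M_1)S\psi(M_2)$ into central-coefficient monomials, use the absorption property \eqref{eqn:propSemianillo4} to place each monomial in $W$, and then use the defining absorption property of $W$ to collapse the remaining terms of $F_S[M_1,M_2]$ into $A$. The only difference is cosmetic packaging via the partial order $\preceq$ and antisymmetry, where the paper instead computes directly $F_S[M_1,M_2]=\sum_{U}+\sum_{W\setminus U}=A+\sum_{W\setminus U}=A$; your explicit remark that additive commutativity is silently used is a fair and worthwhile observation.
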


\begin{proof} Let $\varphi(x)=\sum_{i\in I} v_{i} x^i$ and $\psi(x)=\sum_{i\in J} w_{i} x^i$ for $I,J \subseteq \{1,...,m\}$, $v_i,w_i \in C$. Then  

    $$\varphi(M_1)S\psi(M_2) = \sum_{(i,j) \in I \times J} v_i w_j M_1^i S M_2^j \quad \mbox{and} \quad \varphi(M_1)S\psi(M_2)=A$$  
 
Note that $v_i w_j \in C$. By (\ref{eqn:propSemianillo2}) in the preceding Lemma, we have that 

     $$\sum_{(i,j) \in I\times J} v_i w_j M_1^iSM_2^j=A \Rightarrow \sum_{(i,j) \in I\times J} v_i w_j M_1^iS M_2^j + A = A$$

If we apply now  (\ref{eqn:propSemianillo4}) of the  previous Lemma, given that $I\times J$ is a finite set, we get that
$v_a w_b M_1^a S M_2^b+A=A \quad \forall(a,b) \in I\times J$. Therefore, $(v_a w_b,a,b) \in W \quad \forall(a,b) \in I\times J$. Then, if we denote $U=\{(v_a w_b,a,b) : (a,b)\in I \times J\}$, we have $U \subset W$.

\medskip

Now, set $K=W \setminus U$. Then, by the definition of $W$, we get that $F_{S}[M_1,M_2]=\sum_{(x,i,j)\in W} x M_1^iSM_2^j =  \sum_{(x,i,j)\in U} x M_1^iSM_2^j +  \sum_{(x,i,j)\in K} x M_1^iSM_2^j = \\ A+\sum_{(x, i,j)\in K} x M_1^iSM_2^j = A$

\end{proof} 

\begin{rem}
    The set $W$ used in Theorem \ref{main} can be reduced to a smaller set using a greedy approach as described in \cite{steinwandt}. As was pointed out in that work, this method is specially useful in case the semiring $R$ has a zero element, $0$, an identity element $1$ and the center is given by the set $\{0, 1\}$.
\end{rem}

We end this section by considering the case of semirings $R$ with $\vert R\vert \leq 2$. In  \cite[Theorem 14.1]{elbashir}, the authors show a classification of such semirings which are commutative.

\begin{theorem}\label{order2}
Let $R$ be a congruence-simple additively commutative semiring such that $\vert R\vert \leq 2$. Then $R$ is isomorphic to one of the following:

\medskip

\begin{tabular}{| c  c | c  c |}
\hline
 &  &  &  \\
\begin{tabular}{ c | c  c }
$(T_1,+)$ & 0 & 1 \\ \hline
0 & 0 & 0 \\
1 & 0 & 0 \\
\end{tabular} & 
\begin{tabular}{ c | c  c }

$(T_1,\cdot)$ & 0 & 1 \\ \hline
0 & 0 & 0 \\
1 & 0 & 0 \\
\end{tabular} & 
\begin{tabular}{ c | c  c }

$(T_2,+)$ & 0 & 1 \\ \hline
0 & 0 & 0 \\
1 & 0 & 0 \\
\end{tabular} & 
\begin{tabular}{ c | c  c }

$(T_2,\cdot)$ & 0 & 1 \\ \hline
0 & 0 & 0 \\
1 & 0 & 1 \\
\end{tabular}\\ 
&  &  &  \\
\hline
  &  &  &  \\

\begin{tabular}{ c | c  c }

    $(T_3,+)$ & 0 & 1 \\ \hline
    0 & 0 & 0 \\
    1 & 0 & 1 \\
\end{tabular} & 
\begin{tabular}{ c | c  c }

    $(T_3,\cdot)$ & 0 & 1 \\ \hline
    0 & 0 & 0 \\
    1 & 0 & 0 \\
\end{tabular} & 
\begin{tabular}{ c | c  c }

    $(T_4,+)$ & 0 & 1 \\ \hline
    0 & 0 & 0 \\
    1 & 0 & 1 \\
\end{tabular} & 
\begin{tabular}{ c | c  c }

    $(T_4,\cdot)$ & 0 & 1 \\ \hline
    0 & 1 & 1\\
    1 & 1 & 1 \\
\end{tabular}\\ 

&  &  &  \\
\hline
  &  &  &  \\

\begin{tabular}{ c | c  c }
$(T_5,+)$ & 0 & 1 \\ \hline
0 & 0 & 0 \\
1 & 0 & 1 \\
\end{tabular} & 
\begin{tabular}{ c | c  c }

$(T_5,\cdot)$ & 0 & 1 \\ \hline
0 & 0 & 1 \\
1 & 1 & 1 \\
\end{tabular} & 
\begin{tabular}{ c | c  c }

$(T_6,+)$ & 0 & 1 \\ \hline
0 & 0 & 0 \\
1 & 0 & 1 \\
\end{tabular} & 
\begin{tabular}{ c | c  c }

$(T_6,\cdot)$ & 0 & 1 \\ \hline
0 & 0 & 0 \\
1 & 0 & 1 \\
\end{tabular}\\ 

 &  &  &  \\
\hline
  &  &  &  \\

\begin{tabular}{ c | c  c }

$(T_7,+)$ & 0 & 1 \\ \hline
0 & 0 & 1 \\
1 & 1 & 0 \\
\end{tabular} & 
\begin{tabular}{ c | c  c }

$(T_7,\cdot)$ & 0 & 1 \\ \hline
0 & 0 & 0 \\
1 & 0 & 0 \\
\end{tabular} & 
\begin{tabular}{ c | c  c }

$(T_8,+)$ & 0 & 1 \\ \hline
0 & 0 & 1 \\
1 & 1 & 0 \\
\end{tabular} & 
\begin{tabular}{ c | c  c }

$(T_8,\cdot)$ & 0 & 1 \\ \hline
0 & 0 & 0\\
1 & 0 & 1 \\
\end{tabular}\\ 
 &  &  &  \\
\hline

\end{tabular}

\end{theorem}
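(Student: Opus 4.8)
The plan is to treat the statement as a finite classification obtained by direct enumeration, exploiting the fact that on a two–element set there is essentially nothing to verify on the congruence side. The only equivalence relations on a set $R$ with $|R|=2$ are $id_R$ and $R\times R$, and both are automatically congruences for \emph{any} choice of the two operations; hence congruence-simplicity imposes no restriction at all once $|R|=2$ (the case $|R|=1$ being degenerate, since there $id_R=R\times R$). Thus the statement reduces to classifying, up to isomorphism, the semiring structures on $\{0,1\}$, and since all eight tables displayed are commutative I would enumerate the commutative structures.

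First I would enumerate the admissible additions. An addition must be a commutative associative operation on $\{0,1\}$, and a short inspection of the eight symmetric tables shows that, up to the relabeling $0\leftrightarrow 1$, exactly three associative types arise: the constant operation $x+y=0$, the meet (idempotent) operation determined by $1+1=1$ and $0+1=0$, and the group operation of $\mathbb{Z}/2$ given by $1+1=0$, $0+1=1$. In parallel I would list the commutative associative multiplications, namely the two constant maps $x\cdot y=0$ and $x\cdot y=1$, the two semilattice operations (meet and join), and the $\mathbb{Z}/2$ operation.

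Next, for each of the three additions I would pair it with every candidate multiplication and test the distributive law $a(b+c)=ab+ac$ (commutativity makes the second law redundant). This is the computational heart of the argument: the constant addition forces $0$ to be a multiplicative absorbing element, which retains only the null and the meet multiplications, giving $T_1,T_2$; the meet addition is a distributive-lattice operation and is therefore compatible with the null, the constant-$1$, and both semilattice multiplications, but not with the $\mathbb{Z}/2$ multiplication, giving $T_3$–$T_6$; and the $\mathbb{Z}/2$ addition distributes only over the null and the meet multiplications, the latter yielding the field $\mathbb{F}_2$, giving $T_7,T_8$. This produces exactly the eight tables listed.

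Finally I would argue that the eight semirings are pairwise non-isomorphic and that the list is complete. The only nontrivial bijection of $\{0,1\}$ is the transposition $0\leftrightarrow 1$, and a direct check shows it preserves none of the three additions (it interchanges meet with join and carries the $\mathbb{Z}/2$ table to a distinct table), so the sole additive automorphism of each surviving semiring is the identity; consequently two of the listed semirings are isomorphic only if their tables coincide. I expect the main obstacle to be purely organizational: running the distributivity test across all addition–multiplication pairs without omission, and checking that the reduction modulo the relabeling $0\leftrightarrow 1$ neither collapses genuinely distinct types nor admits a spurious ninth one.
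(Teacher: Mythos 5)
The paper offers no proof of this statement at all: it is imported verbatim from El Bashir--Hurt--Jan\v{c}a\v{r}\'{\i}k--Kepka \cite[Theorem 14.1]{elbashir}, so a self-contained enumeration such as yours is a reasonable substitute, and most of its ingredients are sound. Your observation that congruence-simplicity is vacuous when $\vert R\vert =2$ is correct, your list of the three associative commutative additions up to relabelling (constant, meet, $\mathbb{Z}/2$) is complete, the distributivity checks you describe do produce exactly $T_1,\dots ,T_8$ among the structures with \emph{commutative} multiplication, and the non-isomorphism argument via the transposition $0\leftrightarrow 1$ is fine.

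The genuine gap is the opening reduction: ``since all eight tables displayed are commutative I would enumerate the commutative structures.'' The hypothesis of the theorem is only that $R$ is \emph{additively} commutative; nothing is assumed about the multiplication, so restricting to commutative multiplications because the answer happens to be commutative is circular. Worse, the omitted case is not empty: take $x+y=\min(x,y)$ (the addition of $T_3,\dots ,T_6$) together with $x\cdot y=x$. This multiplication is associative, the left distributive law reduces to $a=a+a$ and the right one to $b+c=b+c$, both of which hold since the addition is idempotent; the resulting two-element semiring is additively commutative and (trivially) congruence-simple, yet $0\cdot 1\neq 1\cdot 0$, so it is isomorphic to none of $T_1,\dots ,T_8$. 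Hence your enumeration cannot close as planned: what it actually establishes is the classification of the \emph{commutative} two-element semirings, which is precisely the hypothesis under which the cited source \cite{elbashir} works (and which the surrounding text of the paper tacitly acknowledges when it speaks of ``such semirings which are commutative''). To repair the argument you must either strengthen the hypothesis to full commutativity of $R$, or extend the search to all sixteen multiplication tables and account for the additional projection-type semirings that survive the distributivity test.
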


The following Lemma shows that any other additively commutative semiring with exactly two elements is necessarily additively idempotent. 

\begin{lemma}\label{order2idempotent}
There is no additively non-commutative and non-additively idempotent semiring with exactly 2 elements. 
\end{lemma}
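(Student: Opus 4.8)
The plan is to observe that the statement concerns only the additive semigroup of the semiring: whatever the multiplication is, the addition of a two-element semiring must in particular be an associative binary operation on a two-element set, and I will show that associativity together with non-commutativity already forces idempotency. So I would argue by contradiction, assuming a two-element semiring whose addition is non-commutative but not idempotent, and then derive idempotency from associativity alone, never invoking the multiplication or distributivity.

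Concretely, I would name the two elements $\{0,1\}$ --- purely as labels, assuming nothing about neutral elements --- and read the addition off the four entries $0+0$, $0+1$, $1+0$, $1+1$, each lying in $\{0,1\}$. Non-commutativity is precisely the statement $0+1\neq 1+0$. The key observation is that only two instances of associativity are needed, namely the ``diagonal'' triples $(0,0,0)$ and $(1,1,1)$, and that each of them, fed by the off-diagonal inequality, pins down one diagonal entry.

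For the first diagonal, associativity gives $(0+0)+0 = 0+(0+0)$. Setting $s=0+0$, I would rewrite this as $s+0 = 0+s$; were $s=1$ it would read $1+0 = 0+1$, contradicting non-commutativity, so necessarily $0+0=0$. Symmetrically, associativity of $(1,1,1)$ gives $(1+1)+1 = 1+(1+1)$, and writing $t=1+1$ as $t+1 = 1+t$ shows that $t=0$ would force $0+1 = 1+0$, again impossible; hence $1+1=1$. Together $0+0=0$ and $1+1=1$ say exactly that the addition is idempotent, contradicting the hypothesis and completing the argument.

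I expect the only real point to get right is the recognition that the problem collapses to these two associativity relations, and that it is precisely the inequality $0+1\neq 1+0$ that is consumed in each of them; there is no computational obstacle, and in particular the multiplicative structure is irrelevant, so the conclusion holds for every two-element semiring regardless of its product. I would also note that the non-commutative case need not be split according to which of $0+1$ and $1+0$ equals which element, since both diagonal arguments use only the inequality itself and not the specific values.
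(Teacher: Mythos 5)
Your proof is correct, and it is a cleaner and more complete execution of the same underlying idea as the paper's. The paper enumerates the possible non-idempotent addition tables (three shapes, each with two choices of off-diagonal entries), exhibits one failing associativity triple for a single sub-case, and dismisses the rest with ``analogous reasonings.'' You instead isolate the two diagonal associativity instances $(0+0)+0=0+(0+0)$ and $(1+1)+1=1+(1+1)$ and observe that each, combined only with the inequality $0+1\neq 1+0$, forces the corresponding diagonal entry to be idempotent: writing $s=0+0$, the identity $s+0=0+s$ rules out $s=1$, and symmetrically for $1+1$. This gives a uniform, case-free argument, and in fact proves the slightly stronger statement that any non-commutative associative operation on a two-element set is idempotent --- which is precisely the form in which the lemma is invoked later in the paper ($R$ additively non-commutative $\Rightarrow$ $R$ additively idempotent). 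Your observation that the multiplicative structure and distributivity are irrelevant is also accurate; the paper's proof likewise never uses them. No gaps.
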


\begin{proof}
    Let $(R=\{0,1\}, +, \cdot)$ an additively non-commutative and non-additively idempotent semiring.  Then the addition should be given by one of the following possibilities. 
    
\begin{center}
    \begin{tabular}{ c  |  c  |  c}
         &    \\
        \begin{tabular}{ c | c  c }
            Op1 & 0 & 1 \\ \hline
            0 & 1 & $a$ \\
            1 & $b$ & 0 \\
        \end{tabular}
        & 
        \begin{tabular}{ c | c  c }
            Op2 & 0 & 1 \\ \hline
            0 & 0 & $a$ \\
            1 & $b$ & 0 \\
        \end{tabular} 
        &
        \begin{tabular}{ c | c  c }
            Op3 & 0 & 1 \\ \hline
            0 & 1 & $a$ \\
            1 & $b$ & 1 \\
        \end{tabular}

    \end{tabular}
\end{center}

\noindent where $a,b \in R$ and $a\not = b$ since otherwise, $R$ would be additively commutative. 

Let us check first the case given by Op1 and assume that $a=0$ and $b=1$. Then the table looks like 

\begin{center}
    \begin{tabular}{ c | c  c }
        Op1 & 0 & 1 \\ \hline
        0 & 1 & 0 \\
        1 & 1 & 0 \\
    \end{tabular}
\end{center}

\noindent The following equalities show that the ring is not additively associative, which is a contradiction. 

\[1+(0+1) = 1 + 0 = 1 \]
\[(1+0)+1 = 1 + 1 = 0 \]

Following analogous reasonings for $a=1$ and $b=0$ as well as in the remaining cases for Op2 and Op3 show similar contradictions. 
\end{proof}

\section{Cryptanalysis of the protocol}

Let us focus now in our aim, that is, revealing the common key shared by both parties acting in the protocol proposed by Maze et al. in \cite{maze} in the setting given by a congruence-simple semiring. We recall that both parties agree on such semiring $R$ and matrices  $S, M_1, M_2 \in$ Mat$_n(R)$. Then they choose their own private keys, given by two polynomials $p_A(x), q_A(x)$ and $p_B(x),q_B(x)$ in $C[x]$, being $C$ the center of $R$ and exchange $A = p_a(M_1)\cdot S\cdot q_a(M_2)$. and $B = p_b(M_1)\cdot S\cdot q_b(M_2)$. Our aim is to get the shared key given by 

$$p_a(M_1)Bq_a(M_2) = p_b(M_1)Aq_b(M_2)$$

\noindent from the public information, $S, M_1, M_2, A, B$. To this end, and taking into account Theorem \ref{necessary} and Proposition \ref{zum} we have to consider different possibilities on the semiring $R$.

\subsection{$R$ is additively idempotent}.

\noindent We are in the case of Theorem \ref{main}, so we can apply it to the matrices $M_1, M_2$ and $B$. However, given that the polynomials $p_A(x), q_A(x), p_B(x)$ and $q_B(x)$ are private and thus, their degrees are unknown, let us fix $m$ an upper bound for all of them that will depend on the computational capabilities of the attacker. This is necessary in order that the computation of the polynomial $F_S[X,Y]$ is a finite process. Thus, the attack operates in the following way:

\begin{alg}\label{attack}

Input: $S, M_1, M_2, A, B$.

\medskip

\hspace{2cm} Output: Alice and Bob's shared key. 

\begin{enumerate}
\item Compute $W=\big\{(k,a,b) \in C \times \{0,1,\dots ,m\}^2 : k M_1^a S M_2^b+A = A\big\}$.

\item Compute the polynomial in three variables $F[X,Y,Z]=\displaystyle{\sum_{(k, i,j)\in W}} k X^iZY^j$.

\item Compute the shared key given by $F[M_1,M_2,B]$. 

\end{enumerate}
\end{alg}

\begin{lemma}
Algorithm \ref{attack}  provides the shared key from the public information. 
\end{lemma}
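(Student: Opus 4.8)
The plan is to run the argument of Theorem \ref{main} verbatim, but with the public matrix $B$ placed where $S$ sat, and to verify that the set $W$ that the algorithm computes from the pair $(A,S)$ is exactly the set that also governs the $B$-twisted sum. First I would record that the chosen bound $m$ satisfies $\deg p_a, \deg q_a \le m$, so that $\varphi = p_a$ and $\psi = q_a$ lie in $C_m[x]$ and $p_a(M_1)Sq_a(M_2)=A$; this is precisely the hypothesis of Theorem \ref{main}. I would also observe that the algorithm's $F[X,Y,Z]$ is the polynomial $F_S[X,Y]$ of Theorem \ref{main} upon setting $Z=S$, so that the entire content of the Lemma is the single identity
\begin{equation*}
F[M_1,M_2,B] = p_a(M_1)Bq_a(M_2),
\end{equation*}
whose right-hand side is the shared key.

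Next I would split $W = U \cup K$ as in the proof of Theorem \ref{main}: writing $p_a(x)=\sum_i v_i x^i$ and $q_a(x)=\sum_j w_j x^j$, the subset $U=\{(v_iw_j,i,j)\}$ records the monomials occurring in Alice's product and $K=W\setminus U$ collects the spurious triples. Evaluating the $U$-part of $F[M_1,M_2,B]$ and using that the coefficients $v_iw_j$ are central (so they commute with $M_1^i$, $B$ and $M_2^j$ and may be regrouped), I obtain $\sum_{(i,j)} v_iw_j M_1^iBM_2^j = p_a(M_1)Bq_a(M_2)$, i.e. the $U$-part already reproduces the shared key.

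The main step, and the one I expect to be the crux, is to show that every triple in $W$ contributes a term absorbed by the shared key, that is $kM_1^iBM_2^j + p_a(M_1)Bq_a(M_2) = p_a(M_1)Bq_a(M_2)$ for all $(k,i,j)\in W$. Here I would start from the defining relation $kM_1^iSM_2^j + A = A$, multiply it on the left by $p_b(M_1)$ and on the right by $q_b(M_2)$ (legitimate by distributivity), and simplify the transported summand using centrality of $k$ together with the commutation of polynomials in a single matrix:
\begin{equation*}
p_b(M_1)\,kM_1^iSM_2^j\,q_b(M_2) = kM_1^i\big(p_b(M_1)Sq_b(M_2)\big)M_2^j = kM_1^iBM_2^j,
\end{equation*}
while $p_b(M_1)Aq_b(M_2)$ is the shared key by the key-agreement identity $p_b(M_1)Aq_b(M_2)=p_a(M_1)Bq_a(M_2)$. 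This converts each $S$-relation defining $W$ into the required $B$-absorption relation. The subtlety is exactly that $W$ is built from the public pair $(A,S)$ yet controls the $B$-sum, and it is Bob's unknown polynomials $p_b,q_b$ that mediate the passage, appearing only in the proof and never in the algorithm itself (in particular no degree bound on them is needed).

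Finally I would conclude as in Theorem \ref{main}: writing $F[M_1,M_2,B] = \sum_U + \sum_K = p_a(M_1)Bq_a(M_2) + \sum_K$, every summand of $\sum_K$ is absorbed by the shared key by the previous step, so by iterating the absorption through idempotency exactly as in the final step of Theorem \ref{main} (equations (\ref{eqn:propSemianillo3}) and (\ref{eqn:propSemianillo4})) the sum $p_a(M_1)Bq_a(M_2)+\sum_K$ collapses back to $p_a(M_1)Bq_a(M_2)$. Hence the algorithm's output $F[M_1,M_2,B]$ is precisely the shared key.
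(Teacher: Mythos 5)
Your proof is correct, but it follows a genuinely different route from the paper's. The paper's argument is a one-line global factorization: it substitutes $B=p_b(M_1)Sq_b(M_2)$ into every summand of $F[M_1,M_2,B]$, commutes $p_b(M_1)$ past the powers $M_1^i$ and $q_b(M_2)$ past the powers $M_2^j$, pulls both factors out of the entire sum by distributivity, and then invokes Theorem \ref{main} once, as a black box, to identify the inner sum with $A$; the output is thus $p_b(M_1)Aq_b(M_2)$, the shared key. You instead re-run the absorption machinery of Theorem \ref{main} with $B$ in place of $S$: you split $W=U\cup K$ using Alice's polynomials, check that the $U$-part of the $B$-sum already equals $p_a(M_1)Bq_a(M_2)$, and then transport each defining relation $kM_1^iSM_2^j+A=A$ of $W$ into the absorption relation $kM_1^iBM_2^j+p_a(M_1)Bq_a(M_2)=p_a(M_1)Bq_a(M_2)$ by multiplying through by Bob's polynomials and using the key-agreement identity. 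Both arguments are valid and use the same underlying ingredients (centrality of the coefficients, commutation of polynomials in a single matrix, additive idempotency); the paper's is shorter because it never needs to reopen the $U$/$K$ decomposition, while yours makes explicit which triples of $W$ actually generate the key and why the spurious ones are harmless, and it isolates precisely where Bob's (respectively Alice's) secret polynomials enter the verification. Your closing remark that no degree bound on $p_b,q_b$ is required is accurate and holds for the paper's proof as well.
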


\begin{proof}
The output of the algorithm provides the shared key given that
\begin{multline}
    F[M_{1},M_{2},S] = \sum_{(k,i,j) \in W} k M_{1}^i S M_{2}^j = \sum_{(k,i,j) \in W} k M_{1}^i p_{b}(M_{1}) S q_{b}(M_{2}) M_{2}^j = \\ = \sum_{(k,i,j) \in W} k p_{b}(M_{1}) M_{1}^i  S  M_{2}^j q_{b}(M_{2})= p_{b}(M_{1}) \big( \sum_{(k, i,j) \in W} k M_{1}^i S M_{2}^j  \big) q_{b}(M_{2}) \\
= p_{b}(M_{1}) F[M_{1},M_{2},S] q_{b}(M_{2}) =p_{b}(M_{1}) A q_{b}(M_{2}) 
\end{multline}

where we have used that, by Theorem \ref{main} $F[M_1,M_2,S]=A$. 

\end{proof}

\begin{lemma}
Let $n\times n$ be the size of the matrices used in the key exchange protocol. If we assume that $m$ is the chosen upper bound for the degree of the private polynomials, then the number of operations of Algorithm \ref{attack} is $O(m^2n^3)$. 
\end{lemma}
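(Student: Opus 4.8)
The plan is to bound the cost of Algorithm \ref{attack} step by step, counting a single addition, multiplication, or comparison in $R$ as one operation, so that one product of two $n\times n$ matrices over $R$ costs $O(n^3)$. Throughout I treat $\vert C\vert$ as a constant: the semiring $R$ is fixed and finite, hence its center $C$ is a fixed finite set whose size does not grow with $m$ or $n$. With this convention I will show that Steps 1 and 3 each cost $O(m^2n^3)$ while Step 2 is negligible, giving the claimed bound $O(m^2n^3)$.

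For Step 1, the naive reading iterates over all $(k,a,b)\in C\times\{0,\dots,m\}^2$ and recomputes $M_1^aSM_2^b$ from scratch, which is wasteful. First I would precompute the powers $M_1^0,\dots,M_1^m$ and $M_2^0,\dots,M_2^m$ by repeated multiplication, costing $2m$ matrix products, i.e. $O(mn^3)$. Next I would form the partial products $M_1^aS$ for each $a$ (another $O(mn^3)$) and then the $(m+1)^2$ matrices $M_1^aSM_2^b$, one matrix product each, for a total of $O(m^2n^3)$; this is the dominant term. Finally, for each of the $\vert C\vert(m+1)^2=O(m^2)$ triples $(k,a,b)$, forming $kM_1^aSM_2^b$ is an entrywise central multiplication ($O(n^2)$), adding $A$ is $O(n^2)$, and testing equality with $A$ is $O(n^2)$; summed over all triples this is $O(m^2n^2)$, which is dominated by $O(m^2n^3)$.

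Step 2 merely records the symbolic polynomial $F$, i.e. stores the list $W$, which has $\vert W\vert\le\vert C\vert(m+1)^2=O(m^2)$ entries and thus costs $O(m^2)$. For Step 3 I would reuse the already computed powers $M_1^i$ and $M_2^j$: precompute $M_1^iB$ for the relevant $i$ in $O(mn^3)$, then for each of the $\vert W\vert=O(m^2)$ terms compute $(M_1^iB)M_2^j$ in $O(n^3)$, scale by $k$ in $O(n^2)$, and accumulate into the running sum in $O(n^2)$. This contributes $O(m^2n^3)$. Adding the three steps gives the total $O(m^2n^3)$.

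The only real subtlety---and the step I would be most careful about---is the bookkeeping that keeps the degree factor at $m^2$ rather than $m^3$: the powers $M_1^a$, $M_2^b$ and the products $M_1^aSM_2^b$ must be computed once and stored, not recomputed inside the triple loop, and the central scalar $k$ must be factored out so that the $\vert C\vert$ factor enters only the lower-order $O(n^2)$ terms. Once the amortization is organized this way, and $\vert C\vert$ is granted to be constant, the $O(m^2n^3)$ bound follows immediately from summing the per-step costs.
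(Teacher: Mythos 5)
Your proposal is correct and follows essentially the same route as the paper: memoize the products $M_1^aSM_2^b$ so that determining $W$ costs $O(m^2)$ matrix multiplications at $O(n^3)$ each, with the additions, comparisons and the (constant-size) center contributing only lower-order terms. If anything your accounting is the more careful one: the paper claims the memoization reduces the number of matrix products to $O(2m)$, which does not add up (the $(m+1)^2$ distinct products $M_1^aSM_2^b$ each require a multiplication), whereas your $O(m^2)$ count is what actually matches the stated bound; you also cost out Step 3 explicitly, which the paper leaves implicit, and you replace the paper's appeal to $C=\{0,1\}$ by the equally valid observation that $\vert C\vert$ is a constant.
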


\begin{proof}
The cost for building the polynomial $F[X,Y,Z]$ is basically the cost of determining the set $W$ of Theorem \ref{main}. 

Firstly, note that in \cite[Remark 3.9]{zumbragel} it is said that if the semiring $R$ is finite and additively idempotent, then the center $C$ of $R$ is the set $\{0,1\}$, so it is only necessary to check if $M_1^a S M_2^b + A =A$ for $a,b \in \{0,1,\cdots,m\}$. 

Secondly, the cost of determining $W$ is given by $O(m^3)$ total matrix products, and $O(m^2)$ matrix additions and comparisons. However, we can compute $M_1^{a+1} S M_2 ^b$ and $M_1^a S M_2 ^{b+1}$ using $M_1^a S M_2 ^b$. Therefore, it is possible to reduce the number of matrix products to $O(2m)$ by storing all those products already calculated. 

Now given that, in the worst case, the number of operations in the base semiring $R$ for a matrix product of two square matrices of order $n$  is $O(n^3)$, then we get that the number of operations in the base semiring is $O(m^2n^3)$. 
\end{proof}

\begin{example}
The polynomial $F[X,Y,Z]$ that provides the shared key of \cite[Example 5.13]{maze} is given by the following

\begin{equation}
\begin{split}
F [X,Z,Y]= & X^{1} Z Y^{0}+X^{2} Z Y^{0}+X^{3} Z Y^{0}+X^{4} Z Y^{0}+X^{6} Z Y^{0}+X^{7} Z Y^{0}+X^{9} Z Y^{0}+\\ X^{1} Z Y^{5}+
& X^{2} Z Y^{5}+X^{3} Z Y^{5}+X^{4} Z Y^{5}+X^{6} Z Y^{5}+X^{7} Z Y^{5}+X^{9} Z Y^{5}+X^{1} Z Y^{7}+\\ X^{2} Z Y^{7}+
& X^{3} Z Y^{7}+X^{4} Z Y^{7}+X^{6} Z Y^{7}+X^{7} Z Y^{7}+X^{9} Z Y^{7}+X^{1} Z Y^{8}+X^{2} Z Y^{8}+\\ X^{3} Z Y^{8}+
& X^{4} Z Y^{8}+X^{6} Z Y^{8}+X^{7} Z Y^{8}+X^{9} Z Y^{8}+X^{1} Z Y^{9}+X^{2} Z Y^{9}+X^{3} Z Y^{9}+\\ X^{4} Z Y^{9}+
& X^{6} Z Y^{9}+X^{7} Z Y^{9}+X^{9} Z Y^{9}+X^{1} Z Y^{10}+X^{2} Z Y^{10}+X^{3} Z Y^{10}+X^{4} Z Y^{10}+\\ X^{6} Z Y^{10}+
& X^{7} Z Y^{10}+X^{9} Z Y^{10}+X^{1} Z Y^{11}+X^{2} Z Y^{11}+X^{3} Z Y^{11}+X^{4} Z Y^{11}+X^{6} Z Y^{11}+\\ X^{7} Z Y^{11}+
& X^{9} Z Y^{11}+X^{1} Z Y^{12}+X^{2} Z Y^{12}+X^{3} Z Y^{12}+X^{4} Z Y^{12}+X^{6} Z Y^{12}+X^{7} Z Y^{12}+\\ X^{9} Z Y^{12}+
& X^{1} Z Y^{13}+X^{2} Z Y^{13}+X^{3} Z Y^{13}+X^{4} Z Y^{13}+X^{6} Z Y^{13}+X^{7} Z Y^{13}+X^{9} Z Y^{13}+\\ X^{1} Z Y^{14}+
& X^{2} Z Y^{14}+X^{3} Z Y^{14}+X^{4} Z Y^{14}+X^{6} Z Y^{14}+X^{7} Z Y^{14}+X^{9} Z Y^{14}+X^{1} Z Y^{15}+\\ X^{2} Z Y^{15}+
& X^{3} Z Y^{15}+X^{4} Z Y^{15}+X^{6} Z Y^{15}+X^{7} Z Y^{15}+X^{9} Z Y^{15}+X^{1} Z Y^{16}+X^{2} Z Y^{16}+\\ X^{3} Z Y^{16}+
& X^{4} Z Y^{16}+X^{6} Z Y^{16}+X^{7} Z Y^{16}+X^{9} Z Y^{16}+X^{1} Z Y^{17}+X^{2} Z Y^{17}+X^{3} Z Y^{17}+\\ X^{4} Z Y^{17}+
& X^{6} Z Y^{17}+X^{7} Z Y^{17}+X^{9} Z Y^{17}+X^{1} Z Y^{18}+X^{2} Z Y^{18}+X^{3} Z Y^{18}+X^{4} Z Y^{18}+\\ X^{6} Z Y^{18}+
& X^{7} Z Y^{18}+X^{9} Z Y^{18}+X^{1} Z Y^{19}+X^{2} Z Y^{19}+X^{3} Z Y^{19}+X^{4} Z Y^{19}+X^{6} Z Y^{19}+\\ X^{7} Z Y^{19}+
& X^{9} Z Y^{19}+X^{1} Z Y^{20}+X^{2} Z Y^{20}+X^{3} Z Y^{20}+X^{4} Z Y^{20}+X^{6} Z Y^{20}+X^{7} Z Y^{20}+\\ X^{9} Z Y^{20}+
& X^{1} Z Y^{21}+X^{2} Z Y^{21}+X^{3} Z Y^{21}+X^{4} Z Y^{21}+X^{6} Z Y^{21}+X^{7} Z Y^{21}+X^{9} Z Y^{21}+\\ X^{1} Z Y^{22}+
& X^{2} Z Y^{22}+X^{3} Z Y^{22}+X^{4} Z Y^{22}+X^{6} Z Y^{22}+X^{7} Z Y^{22}+X^{9} Z Y^{22}+X^{1} Z Y^{23}+\\ X^{2} Z Y^{23}+
& X^{3} Z Y^{23}+X^{4} Z Y^{23}+X^{6} Z Y^{23}+X^{7} Z Y^{23}+X^{9} Z Y^{23}+X^{1} Z Y^{24}+X^{2} Z Y^{24}+\\ X^{3} Z Y^{24}+
& X^{4} Z Y^{24}+X^{6} Z Y^{24}+X^{7} Z Y^{24}+X^{9} Z Y^{24}+X^{1} Z Y^{25}+X^{2} Z Y^{25}+X^{3} Z Y^{25}+\\ X^{4} Z Y^{25}+
& X^{6} Z Y^{25}+X^{7} Z Y^{25}+X^{9} Z Y^{25}+X^{1} Z Y^{26}+X^{2} Z Y^{26}+X^{3} Z Y^{26}+X^{4} Z Y^{26}+\\ X^{6} Z Y^{26}+
& X^{7} Z Y^{26}+X^{9} Z Y^{26}+X^{1} Z Y^{27}+X^{2} Z Y^{27}+X^{3} Z Y^{27}+X^{4} Z Y^{27}+X^{6} Z Y^{27}+\\ X^{7} Z Y^{27}+
& X^{9} Z Y^{27}+X^{1} Z Y^{28}+X^{2} Z Y^{28}+X^{3} Z Y^{28}+X^{4} Z Y^{28}+X^{6} Z Y^{28}+X^{7} Z Y^{28}+\\ X^{9} Z Y^{28}+
& X^{1} Z Y^{29}+X^{2} Z Y^{29}+X^{3} Z Y^{29}+X^{4} Z Y^{29}+X^{6} Z Y^{29}+X^{7} Z Y^{29}+X^{9} Z Y^{29}+\\ X^{1} Z Y^{30}+
& X^{2} Z Y^{30}+X^{3} Z Y^{30}+X^{4} Z Y^{30}+X^{6} Z Y^{30}+X^{7} Z Y^{30}+X^{9} Z Y^{30}+X^{1} Z Y^{31}+\\ X^{2} Z Y^{31}+
& X^{3} Z Y^{31}+X^{4} Z Y^{31}+X^{6} Z Y^{31}+X^{7} Z Y^{31}+X^{9} Z Y^{31}+X^{1} Z Y^{32}+X^{2} Z Y^{32}+\\ X^{3} Z Y^{32}+
& X^{4} Z Y^{32}+X^{6} Z Y^{32}+X^{7} Z Y^{32}+X^{9} Z Y^{32}+X^{1} Z Y^{33}+X^{2} Z Y^{33}+X^{3} Z Y^{33}+\\ X^{4} Z Y^{33}+
& X^{6} Z Y^{33}+X^{7} Z Y^{33}+X^{9} Z Y^{33}+X^{1} Z Y^{34}+X^{2} Z Y^{34}+X^{3} Z Y^{34}+X^{4} Z Y^{34}+\\ X^{6} Z Y^{34}+
& X^{7} Z Y^{34}+X^{9} Z Y^{34}+X^{1} Z Y^{35}+X^{2} Z Y^{35}+X^{3} Z Y^{35}+X^{4} Z Y^{35}+X^{6} Z Y^{35}+\\ X^{7} Z Y^{35}+
& X^{9} Z Y^{35}+X^{1} Z Y^{36}+X^{2} Z Y^{36}+X^{3} Z Y^{36}+X^{4} Z Y^{36}+X^{6} Z Y^{36}+X^{7} Z Y^{36}+\\ X^{9} Z Y^{36}+
& X^{1} Z Y^{37}+X^{2} Z Y^{37}+X^{3} Z Y^{37}+X^{4} Z Y^{37}+X^{6} Z Y^{37}+X^{7} Z Y^{37}+X^{9} Z Y^{37}+\\ X^{1} Z Y^{38}+
& X^{2} Z Y^{38}+X^{3} Z Y^{38}+X^{4} Z Y^{38}+X^{6} Z Y^{38}+X^{7} Z Y^{38}+X^{9} Z Y^{38}+X^{1} Z Y^{39}+\\ X^{2} Z Y^{39}+
& X^{3} Z Y^{39}+X^{4} Z Y^{39}+X^{6} Z Y^{39}+X^{7} Z Y^{39}+X^{9} Z Y^{39}+X^{1} Z Y^{40}+X^{2} Z Y^{40}+\\ X^{3} Z Y^{40}+
& X^{4} Z Y^{40}+X^{6} Z Y^{40}+X^{7} Z Y^{40}+X^{9} Z Y^{40}+X^{1} Z Y^{41}+X^{2} Z Y^{41}+X^{3} Z Y^{41}+\\ X^{4} Z Y^{41}+
& X^{6} Z Y^{41}+X^{7} Z Y^{41}+X^{9} Z Y^{41}+X^{1} Z Y^{42}+X^{2} Z Y^{42}+X^{3} Z Y^{42}+X^{4} Z Y^{42}+\\ X^{6} Z Y^{42}+
& X^{7} Z Y^{42}+X^{9} Z Y^{42}+X^{1} Z Y^{43}+X^{2} Z Y^{43}+X^{3} Z Y^{43}+X^{4} Z Y^{43}+X^{6} Z Y^{43}+\\ X^{7} Z Y^{43}+
& X^{9} Z Y^{43}+X^{1} Z Y^{44}+X^{2} Z Y^{44}+X^{3} Z Y^{44}+X^{4} Z Y^{44}+X^{6} Z Y^{44}+X^{7} Z Y^{44}+\\ X^{9} Z Y^{44}+
& X^{1} Z Y^{45}+X^{2} Z Y^{45}+X^{3} Z Y^{45}+X^{4} Z Y^{45}+X^{6} Z Y^{45}+X^{7} Z Y^{45}+X^{9} Z Y^{45}+\\ X^{1} Z Y^{46}+
& X^{2} Z Y^{46}+X^{3} Z Y^{46}+X^{4} Z Y^{46}+X^{6} Z Y^{46}+X^{7} Z Y^{46}+X^{9} Z Y^{46}+X^{1} Z Y^{47}+\\ X^{2} Z Y^{47}+
& X^{3} Z Y^{47}+X^{4} Z Y^{47}+X^{6} Z Y^{47}+X^{7} Z Y^{47}+X^{9} Z Y^{47}+X^{1} Z Y^{48}+X^{2} Z Y^{48}+\\ X^{3} Z Y^{48}+
& X^{4} Z Y^{48}+X^{6} Z Y^{48}+X^{7} Z Y^{48}+X^{9} Z Y^{48}+X^{1} Z Y^{49}+X^{2} Z Y^{49}+X^{3} Z Y^{49}+\\ X^{4} Z Y^{49}+
& X^{6} Z Y^{49}+X^{7} Z Y^{49}+X^{9} Z Y^{49}+X^{1} Z Y^{50}+X^{2} Z Y^{50}+X^{3} Z Y^{50}+X^{4} Z Y^{50}+\\ X^{6} Z Y^{50}+
& X^{7} Z Y^{50}+X^{9} Z Y^{50}
\end{split}
\end{equation}

\end{example}

\subsection{$R$ is isomorphic to a matrix ring over a finite field}

\medskip

In case $R\cong \rm{Mat}_n(\mathbb{F}_q)$, for some finite field $\mathbb{F}_q$ and some $n \geq 1$, we can apply Caley-Hamilton Theorem and then reduce the computation of the shared key to a linear algebra problem on $\mathbb{F}_q$. This is based on the fact that there exist a ring homomorphism $\rm{Mat}_m(\it{R})\cong \rm{Mat}_{m\times n}(\mathbb{F}_q)$. 

Now, given $S, M_1, M_2, A$ and $B$, we can solve the system of equations given by 

$$\sum_{i=0}^{k-1} \sum_{j=0}^{k-1} d_{i,j} M_{1}^{i} S M_{2}^{j} = A$$

\noindent getting the coefficients $d_{i,j}$ and we can build the function $$F[X,Y,Z] = \sum_{i=0}^{k-1} \sum_{j=0}^{k-1} d_{i,j} X^{i} Y Z^{j}$$

Then 

\begin{equation}
    \begin{split}
        F[M_{1},M_{2},B] & = \sum_{i =0}^{k-1} \sum_{j =0}^{k-1} d_{i,j} M_{1}^i B M_{2}^j = \\ & = \sum_{i =0}^{k-1} \sum_{j =0}^{k-1} d_{i,j} M_{1}^i p_{b}(M_{1}) S q_{b}(M_{2}) M_{2}^j = \\ & = \sum_{i =0}^{k-1} \sum_{j =0}^{k-1} d_{i,j} p_{b}(M_{1}) M_{1}^i  S  M_{2}^j q_{b}(M_{2})= \\ & = p_{b}(M_{1})\sum_{i =0}^{k-1} \sum_{j =0}^{k-1} d_{i,j} M_{1}^i S M_{2}^j  q_{b}(M_{2}) = \\
& = p_{b}(M_{1}) F[M_{1},M_{2},S] q_{b}(M_{2}) = \\ & = p_{b}(M_{1}) A q_{b}(M_{2})
    \end{split}
\end{equation}

\noindent which is the shared key and that is clearly obtained in polynomial time.

\subsection{Other cases}

Taking into account Theorem \ref{necessary}, it remains to analyze three cases. 

In case $R$ is a zero multiplication ring of prime order, then the product of matrices through the protocol provide zero matrices. Thus, this does not have any sense in this environment. The same occurs in the case $R$ contains an absorbing element $\infty$, i.e. $x\cdot \infty=\infty \cdot x=\infty$ for every $x\in R$ and $R+R=\{ \infty \}$. In this case, the matrices involved will be formed by $\infty$ in every entrance. 

We just have to analyze the case $R$ has order 2. Then by Theorem \ref{order2} we have that if $R$ is congruence-simple additively commutative, then it is isomorphic to one of those semirings. We can observe that $T_3, T_4, T_5$ and $T_6$ are additively idempotent and so the preceding attack for such semirings applies. In the cases $T_1, T_2$ and $T_7$, by their structure, they will provide zero matrices and therefore the protocol has no sense. Finally, $T_8$ is exactly $\mathbb{Z}_2$ and then, the reasoning over finite fields applies. 

It remains the case $R$ is additively non-commutative. But by Lemma \ref{order2idempotent}, then $R$ is additively idempotent, and so we are again in a one of the previous analyzed cases.

\section{Conclusions}

 We have shown the existence of arguments that provide a cryptanalysis for the key exchange protocol taking a congruence-simple semiring as an environment as it is proposed in \cite{maze}. We provide a new algorithm that reveals the shared key in the case of additively idempotent congruence-simple semirings and give arguments in the rest of the cases of congruence-simple rings. Moreover, given that the existence of any congruence in such semirings, as it is stated by the own authors, would provide a Pohlig-Hellman type reduction of the corresponding Semigroup Acton Problem, then we can conclude that this


\begin{thebibliography}{00}

\bibitem{diffiehellman} W. Diffie, M. E. Hellman. New directions in cryptography, {\it IEEE Trans. Inf. Theory} {\bf 22} (1976) 644-654.

\bibitem{elbashir} R. El Bashir, J. Hurt, A. Jan\v{c}a\v{r}\'{\i}k, T. Kepka. Simple commutative semirings. {\it J. Algebra} {\bf 236} (2001) 277-306.

\bibitem{koblitz} N. Koblitz. Elliptic curve cryptosystems. {\it Math. Comp.} 48 (1987) 203-209.

\bibitem{maze} G. Maze, C. Monico, J. Rosenthal, Public key cryptography based on semigroup actions, {\it Adv. Math. Commun.,} {\bf 1} (2007) 489-507.

\bibitem{miller} V.S. Miller. (1986). Use of Elliptic Curves in Cryptography n: Williams, H.C. (eds) Advances in Cryptology - CRYPTO ' 85 Proceedings. CRYPTO 1985. Lecture Notes in Computer Science, vol 218. Springer, Berlin, Heidelberg, 417-426.

\bibitem{monico} C. Monico, On finite congruence-simple semirings, {\it J. Algebra} {\bf 271} (2004) 846-854.

\bibitem{steinwandt} R. Steinwandt, A. Su\'arez. Cryptanalysis of a 2-party key establishment based on a semigroup action problem, {\it Adv. Math. Commun.} {\bf 5} (2011) 87-92.

\bibitem{zumbragel} J. Zumbr\"agel, Classification of finite congruence-simple semirings with zero, {\it J. Algebra Appll.} {\bf 7} (2008) 363-377.



\end{thebibliography}
\end{document}